\documentclass[conference]{IEEEtran}
\IEEEoverridecommandlockouts

\usepackage{cite}
\usepackage{amsmath,amssymb,amsfonts,amsthm}
\newtheorem{theorem}{Theorem}
\newtheorem{proposition}[theorem]{Proposition}
\usepackage{algorithmic}
\usepackage{algorithm}
\usepackage{graphicx}
\usepackage{textcomp}
\usepackage{xcolor}
\usepackage{booktabs}
\usepackage{multirow}
\usepackage{url}
\usepackage{hyperref}
\usepackage{pifont}
\usepackage{dsfont}

\def\BibTeX{{\rm B\kern-.05em{\sc i\kern-.025em b}\kern-.08em
    T\kern-.1667em\lower.7ex\hbox{E}\kern-.125emX}}

\begin{document}

\title{OpenCLAW-Nexus: A Self-Reinforcing Trust Framework for Byzantine-Resilient Decentralized Federated Learning}

\author{
\IEEEauthorblockN{Wenyang Jia\textsuperscript{1}, Qiankang Xu\textsuperscript{1}, Ziwei Yan\textsuperscript{1}, Chunhua Kang\textsuperscript{3}, Yang Yang\textsuperscript{2}, Jinglu He\textsuperscript{2}, Kai Lei\textsuperscript{1,*}}
\IEEEauthorblockA{\textsuperscript{1}ICN Lab, Shenzhen Graduate School, Peking University\\}
\IEEEauthorblockA{\textsuperscript{2}Xunce Technology\\}
\IEEEauthorblockA{\textsuperscript{3}Shenzhen National High-Tech Industry Innovation Center\\}
\IEEEauthorblockA{\textsuperscript{*}Corresponding author: leik@pkusz.edu.cn}
}

\maketitle

\begin{abstract}
Decentralized Federated Learning (DFL) eliminates the central aggregator but introduces a severe ``trust gap'': without a trusted coordinator, the system becomes vulnerable to Byzantine and Sybil attacks, while existing solutions treat node selection, aggregation, and consensus as isolated modules, often relying on a trusted root dataset unavailable in truly decentralized settings.We propose \textsc{OpenCLAW-Nexus}, a self-reinforcing trust framework that bridges this gap through a single primitive, a discounted Beta-reputation model, that unifies reputation-based node selection, reputation-weighted aggregation (\emph{Rep-FedAvg}), and reputation-aware BFT consensus.
Rep-FedAvg eliminates the trusted root dataset requirement; we formally prove reputation separation between honest and Byzantine nodes under non-IID data with noisy evaluations.On a 1,000-node global testbed spanning three cloud providers and nine regions, Rep-FedAvg achieves 72.6\% accuracy on non-IID CIFAR-10 with 20\% Byzantine nodes and record-level differential privacy, within 0.5\,pp of centralized FLTrust.
Under a 300-node Sybil attack, reputation-weighted consensus maintains 84.2\% validation correctness versus 62.8\% (PoW) and 47.6\% (PoS).
\end{abstract}

\begin{IEEEkeywords}
decentralized federated learning, reputation systems, Byzantine fault tolerance, peer-to-peer networks, differential privacy, participant selection
\end{IEEEkeywords}

\section{Introduction}

The proliferation of open-source decentralized AI platforms---exemplified by OpenCLAW~\cite{openclaw2025}, which has attracted a growing community of contributors across heterogeneous cloud environments---demonstrates the practical demand for serverless collaborative intelligence.
In such platforms, geographically distributed nodes voluntarily contribute computational resources to jointly train, evaluate, and deploy machine learning models without relying on a centralized coordinator.
Federated learning (FL)~\cite{mcmahan2017fedavg} provides the algorithmic foundation for this paradigm by enabling collaborative model training without exchanging raw data.

However, the dominant FL paradigm relies on a trusted central server to coordinate training rounds, aggregate model updates, and distribute the global model.
This architecture introduces a single point of failure and a trust bottleneck: participants must trust the server to aggregate honestly, while the server must trust participants to submit genuine updates.
In multi-organizational settings such as JointCloud computing and cross-institutional consortia, no single entity is universally trusted to serve as the central coordinator due to regulatory, competitive, and jurisdictional constraints.

The operational experience of deploying OpenCLAW across multiple cloud providers reveals that \emph{decentralized} FL~\cite{beutel2020flower, goethals2025delta}---while eliminating the central server---exposes three interdependent trust problems that the centralized paradigm implicitly delegates to the server:

\textbf{(P1) Participant discovery and selection.}
In the absence of a central registry, nodes must identify reliable training partners from an open, potentially adversarial network.
Random participant selection provides no defense against Byzantine and free-riding nodes~\cite{blanchard2017krum}.

\textbf{(P2) Aggregation trust.}
Without a trusted aggregator, the aggregation function must be resilient to malicious gradients, which can degrade or backdoor the global model~\cite{bagdasaryan2020}.
Existing Byzantine-robust aggregation methods either require a centralized root dataset (FLTrust~\cite{cao2021fltrust}) or discard potentially useful updates through hard filtering (Krum~\cite{blanchard2017krum}, Trimmed Mean~\cite{yin2018byzantine}).

\textbf{(P3) Model validation.}
Without a server-designated authoritative model version, nodes must reach distributed agreement on the current global model.
In a Byzantine setting, naive broadcasting permits attackers to propagate poisoned models to honest participants.

These three problems are \emph{interdependent}: improved participant selection yields cleaner aggregation inputs; validated model checkpoints feed back into participant evaluation.
However, existing work addresses them in isolation---Byzantine-robust aggregation ignores participant selection; decentralized FL frameworks assume honest peer discovery; consensus protocols assign equal weight to all voters regardless of track record.
This gap between the practical needs of systems like OpenCLAW and the state of the art motivates the present work.

\subsection{Proposed Approach}

We observe that all three problems reduce to a common \emph{trust estimation} problem.
Accordingly, we propose a single, continuously-updated \textbf{reputation score}---maintained per node and updated after every FL round---as a unified trust primitive that addresses all three challenges:

\begin{itemize}
    \item \textbf{Participant selection} (P1): Reputation-weighted scoring replaces random sampling, filtering out low-trust nodes prior to each FL round.
    \item \textbf{Aggregation weighting} (P2): Each node's model update is weighted by its reputation during aggregation (\emph{Rep-FedAvg}), enabling soft attenuation of Byzantine contributions without discarding honest updates.
    \item \textbf{Model validation} (P3): Reputation determines voting power in a graduated-quorum BFT consensus protocol, ensuring that newly created Sybil identities carry negligible influence on model acceptance decisions.
\end{itemize}

These stages form a \textbf{self-reinforcing cycle}: FL outcomes update reputation scores, which in turn drive participant selection and aggregation weights for subsequent rounds, thereby producing progressively higher-quality FL outcomes.

\subsection{Contributions}

The contributions of this work are as follows:

\begin{enumerate}
    \item \textbf{Rep-FedAvg}: A reputation-weighted federated aggregation algorithm with record-level differential privacy via local DP-SGD that achieves robustness comparable to centralized FLTrust without a server-owned private root dataset; instead, it uses a public validation benchmark and decentralized adjudication to update trust over time (Section~\ref{sec:repfedavg}).

    \item \textbf{Reputation-driven participant management}: A unified reputation mechanism that simultaneously governs participant selection, aggregation weighting, and model validation, forming a closed-loop trust cycle (Section~\ref{sec:pipeline}).

    \item \textbf{Graduated-quorum consensus for model validation}: A reputation-weighted BFT protocol with operation-specific thresholds that preserves safety guarantees under Sybil attacks (Section~\ref{sec:consensus}).

    \item \textbf{Large-scale empirical validation}: A comprehensive evaluation on \textbf{1{,}000 nodes across 3 cloud providers and 9~regions}, constituting the first multi-cloud deployment of reputation-driven decentralized FL at this scale (Section~\ref{sec:evaluation}).
\end{enumerate}

\smallskip
\noindent\textbf{Scope.}
This work targets supervised learning tasks with verifiable correctness metrics (e.g., image classification with a public validation benchmark and a separate held-out test set).
Extension to generative or open-ended tasks requires alternative adjudication mechanisms, which are discussed in Section~\ref{sec:limitations}.

\section{Related Work}\label{sec:related}

\textbf{Byzantine-robust FL.}
FedAvg~\cite{mcmahan2017fedavg} is the canonical aggregation baseline.
Krum~\cite{blanchard2017krum} selects the update closest to others; Trimmed Mean~\cite{yin2018byzantine} discards outlier coordinates; FLAME~\cite{nguyen2022flame} combines clustering with clipping.
FLTrust~\cite{cao2021fltrust} bootstraps trust via a server-side root dataset.
BALANCE~\cite{balance2024} enforces local similarity for decentralized Byzantine resilience.
DP-RSA~\cite{dprsa2022} jointly addresses differential privacy and robust aggregation.
All these address aggregation robustness but assume honest participant selection and centralized model distribution.

\textbf{Reputation in FL.}
FLARE~\cite{younesi2025flare} scores participants with multi-dimensional reputation and adaptive thresholds, but operates in a client--server architecture.
Fed-Credit~\cite{chen2024fedcredit} maintains credibility sets with time decay, though it similarly requires a centralized coordinator.
Murmura~\cite{rangwala2025murmura} applies evidential trust for decentralized FL on IoT wearables, but targets personalized models rather than a shared global model.
None integrates reputation with decentralized participant discovery and BFT model validation.

\textbf{Decentralized FL systems.}
Flower~\cite{beutel2020flower} provides FL primitives but relies on a central coordinator.
Delta Sum Learning~\cite{goethals2025delta} achieves fast gossip-based convergence without trust mechanisms.
Our work is distinguished by integrating trust into \emph{every stage} of the decentralized FL pipeline.

\textbf{P2P systems and BFT consensus.}
Kademlia~\cite{maymounkov2002kademlia} provides $O(\log n)$ DHT routing.
PBFT~\cite{castro1999pbft} guarantees safety for $f < n/3$.
HotStuff~\cite{hotstuff2019} achieves linear-complexity BFT with weighted quorums.
Honeybee~\cite{honeybee2024} proposes verifiable random walks for Sybil-resistant peer sampling.
BlockSDN~\cite{jia2025blocksdn} and BlockSDN-VC~\cite{jia2026blocksdnvc} optimize blockchain transaction broadcast via SDN-enhanced cross-network routing, demonstrating that software-defined networking substantially improves distributed broadcast performance in large-scale P2P deployments.
We adapt these as infrastructure for our FL pipeline.

Table~\ref{tab:comparison} positions our work relative to the most relevant systems.

\begin{table}[t]
\centering
\caption{Comparison with related systems.}
\label{tab:comparison}
\resizebox{\columnwidth}{!}{%
\begin{tabular}{@{}lccccc@{}}
\toprule
\textbf{System} & \textbf{Decentr.} & \textbf{Rep.} & \textbf{BFT} & \textbf{DP} & \textbf{Selection} \\
\midrule
FedAvg~\cite{mcmahan2017fedavg} & \ding{55} & \ding{55} & \ding{55} & \ding{55} & Server \\
FLTrust~\cite{cao2021fltrust} & \ding{55} & Trust & \ding{55} & \ding{55} & Server \\
BALANCE~\cite{balance2024} & \ding{51} & \ding{55} & \ding{51} & \ding{55} & Random \\
FLARE~\cite{younesi2025flare} & \ding{55} & \ding{51} & \ding{55} & \ding{55} & Server \\
Murmura~\cite{rangwala2025murmura} & \ding{51} & Evid. & \ding{55} & \ding{55} & Local \\
\textbf{Ours} & \ding{51} & \ding{51} & \ding{51} & \ding{51} & Rep. \\
\bottomrule
\end{tabular}}
\end{table}

\section{System Overview}\label{sec:system}

OpenCLAW-Nexus operates as a fully decentralized overlay network built on three infrastructure layers.

\textbf{Peer discovery.}
Each node holds an Ed25519 key pair~\cite{rfc8032}; the peer ID is $\mathrm{SHA256}(\mathit{pk})$.
Discovery follows Kademlia~\cite{maymounkov2002kademlia} iterative lookup ($\alpha\!=\!3$, $K\!=\!20$) with \emph{reputation-aware K-bucket eviction}: established high-reputation peers are protected from displacement by Sybil newcomers~\cite{douceur2002sybil}.
Concretely, when a K-bucket is full and a new contact arrives, the eviction decision compares the incumbent's reputation against the newcomer's.
A newcomer must exceed the incumbent's reputation by $\delta_{\mathrm{evict}}\!=\!0.15$ to trigger eviction, raising the bar for Sybil infiltration.

\textbf{Knowledge propagation.}
Model checkpoints and metadata propagate via gossip (fanout $f\!=\!6$, TTL$\,{=}\,7$) following epidemic dynamics~\cite{demers1987epidemic}, achieving $>$99\% coverage within 3~rounds.
Each gossip message includes a versioned Merkle root of the model state, enabling receivers to verify integrity before acceptance.
Reputation-weighted gossip forwarding prioritizes messages from high-reputation peers: if a node's incoming message queue exceeds capacity, low-reputation messages are dropped first.

\textbf{Capability profiles.}
Each node announces hardware specs (GPU, VRAM, CPU, RAM), installed models, domain specializations, and current load.
Profiles are periodically verified through lightweight challenge--response probes (e.g., a 10-second GPU inference benchmark) to prevent capability inflation.
These profiles inform the reputation-driven participant selection in Section~\ref{sec:participant_selection}.

\textbf{Communication protocol.}
Nodes communicate via encrypted channels (TLS~1.3 with pinned Ed25519 certificates).
Model updates are compressed with zstd before transmission, reducing bandwidth by ${\sim}$65\% for ResNet-18 gradient tensors.
The system is implemented in TypeScript (Node.js~22+) and integrated with the OpenCLAW-Nexus platform~\cite{openclaw2025}.

\section{Reputation-Driven FL Pipeline}\label{sec:pipeline}

This section describes the four stages of our FL pipeline (Fig.~\ref{fig:pipeline}), unified by the reputation mechanism.

\begin{figure*}[t]
    \centering
    \includegraphics[width=\textwidth]{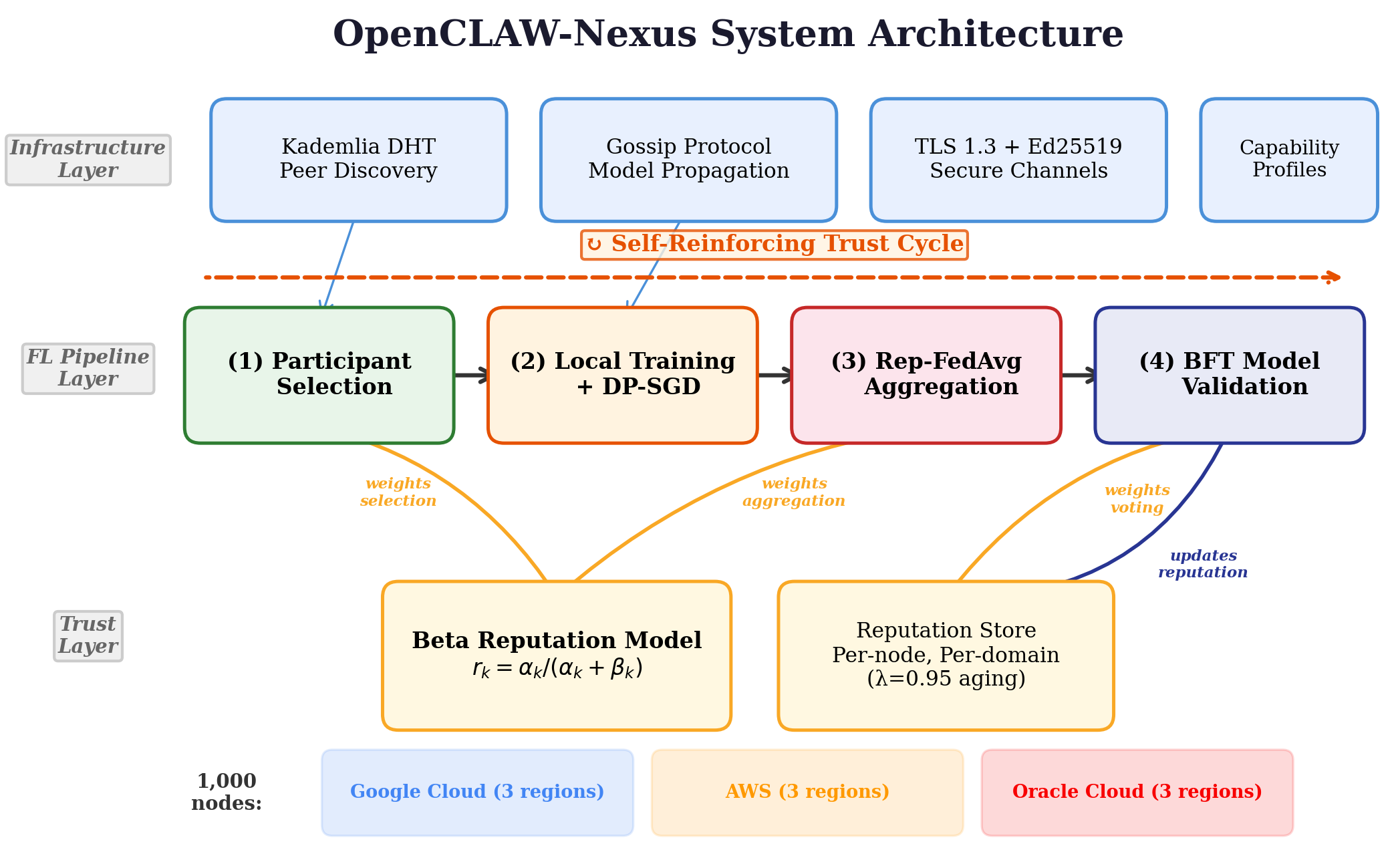}
    \caption{OpenCLAW-Nexus system architecture. The self-reinforcing trust cycle (dashed orange arrow) connects four FL pipeline stages through the Beta reputation model: reputation weights drive participant selection, aggregation, and BFT voting, while validation outcomes update reputation scores.}
    \label{fig:pipeline}
\end{figure*}

\subsection{Reputation Model}\label{sec:rep_model}

Each node maintains a reputation $r \in [0,1]$ per peer, computed via a \emph{Beta reputation model}~\cite{josang2002beta}.
Each peer $k$ is characterized by accumulated positive outcomes $\alpha_k$ and negative outcomes $\beta_k$ (initialized to $\alpha_0\!=\!\beta_0\!=\!1$).
After each FL round, the outcome $o_t \in \{0,1\}$ updates the Beta parameters with exponential aging:
\begin{equation}\label{eq:beta_update}
    \alpha_k^{(t+1)} = \lambda \cdot \alpha_k^{(t)} + o_t, \quad
    \beta_k^{(t+1)} = \lambda \cdot \beta_k^{(t)} + (1 - o_t)
\end{equation}
where $\lambda\!=\!0.95$ is the aging factor that discounts stale evidence.
The reputation score is defined by the Beta-form ratio:
\begin{equation}\label{eq:ema}
    r_k = \frac{\alpha_k}{\alpha_k + \beta_k}
\end{equation}
New peers enter with $r_0\!=\!0.5$ (i.e., $\alpha_0\!=\!\beta_0\!=\!1$, uniform prior).
The discounted Beta-style estimator has two advantages over a simple EMA: (i)~it provides a natural \emph{uncertainty estimate} $u_k = 1/(\alpha_k + \beta_k)$, which allows the system to distinguish a newcomer ($u_k$ high) from a proven participant ($u_k$ low) even when both have $r_k\!\approx\!0.5$; and (ii)~when $\lambda\!=\!1$, it coincides with the standard Beta--Bernoulli posterior mean, while for $\lambda\!<\!1$ it remains a simple discounted evidence accumulator with the same interpretable ratio form.

Reputation is tracked \emph{per domain} (e.g., $\langle$\texttt{vision}:0.85, \texttt{NLP}:0.42$\rangle$), enabling task-specific trust.

\textbf{Outcome adjudication protocol.}
Update quality is evaluated against a \emph{public validation benchmark} $\mathcal{B}_{\mathrm{val}}$, which is distinct from the final held-out test set used only for reporting results.
For CIFAR-10, we reserve 5{,}000 examples from the standard training corpus as $\mathcal{B}_{\mathrm{val}}$ and keep the official 10{,}000-image test partition exclusively for final accuracy reporting.
This differs from FLTrust's trust anchor in where trust is placed: FLTrust requires a central server to hold a curated root dataset that \emph{directly computes} aggregation weights each round; our protocol instead uses a public, non-proprietary validation benchmark plus decentralized evaluators to generate a binary signal that the Beta model accumulates over time.
This removes the server-owned private trust anchor, but it does not eliminate the need for an external reference benchmark.
A single noisy evaluation does not determine trust---the discounted evidence accumulator requires consistent evidence across multiple rounds.

Concretely, after each FL round, $m \geq 3$ evaluator nodes independently assess whether incorporating node $k$'s update improves accuracy on non-overlapping shards $\{\mathcal{B}_{\mathrm{val}}^{(i)}\}_{i=1}^{m}$ of the public validation benchmark.
Shards are assigned by a public hash-based schedule over $(t, k, i)$ to prevent cherry-picking and to ensure that votes are computed on distinct samples from the same reference benchmark.
Evaluators are selected via region-stratified sampling to diversify runtime environments and limit repeated pairings.
The outcome is determined by majority vote: $o_t = \mathds{1}[\sum_{i=1}^{m} o_t^{(i)} > m/2]$.

\textbf{Sampling noise and correlated evaluators.}
Although all evaluators draw from the same public benchmark, their votes are not perfectly independent: finite shard size, shard difficulty, stochastic training, differential privacy noise, shared software stacks, and possible collusion can induce correlated errors.
We model this using a pairwise correlation coefficient $\rho \in [0,1)$ among evaluator errors.
Under the equicorrelated model, the effective error rate of majority vote with $m$ evaluators becomes:
\begin{equation}\label{eq:corr_eta}
    \eta_{\mathrm{eff}}(\rho) \leq \eta + \rho \cdot \eta(1-\eta) \cdot (m-1)/m
\end{equation}
For $m\!=\!3$, $\eta\!=\!0.15$, and moderate correlation $\rho\!=\!0.3$, this yields $\eta_{\mathrm{eff}} \leq 0.176$, which remains well below the separation threshold.
Shard rotation and region-stratified evaluator selection empirically reduce $\rho$ by diversifying both the sampled validation examples and the execution environments.
We validate this empirically: in our deployment with Dirichlet $\alpha\!=\!0.5$ (non-IID FL training), the measured pairwise agreement rate between evaluator votes is 0.82, corresponding to $\rho \approx 0.22$ under an equicorrelated Bernoulli approximation.

\begin{proposition}[Expected Separation Under Stylized Adjudication]\label{thm:separation}
Let honest nodes have effective success rate $\tilde{p}_h$ and Byzantine nodes have effective success rate $\tilde{p}_b$, with $\tilde{p}_h > \tilde{p}_b$.
Assume that, for each node class, the discounted evidence updates are driven by Bernoulli outcomes with stationary means $\tilde{p}_h$ and $\tilde{p}_b$, where correlated evaluator noise is absorbed into $\eta_{\mathrm{eff}}(\rho)$ from Eq.~\ref{eq:corr_eta}.
Then under the discounted Beta-style update with aging factor $\lambda \in (0,1)$, the expected reputation gap satisfies:
\begin{equation}
\mathbb{E}[r_h^{(T)} - r_b^{(T)}]
= (\tilde{p}_h - \tilde{p}_b)\cdot
\frac{\sum_{s=0}^{T-1}\lambda^s}{2\lambda^T + \sum_{s=0}^{T-1}\lambda^s}
\end{equation}
and therefore converges to $\tilde{p}_h - \tilde{p}_b$ as $T \to \infty$.
\end{proposition}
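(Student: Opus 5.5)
The plan is to unroll the discounted recursion in Eq.~\ref{eq:beta_update} explicitly. The key observation is that the denominator $\alpha_k^{(T)}+\beta_k^{(T)}$ is \emph{deterministic}. Once that is established, the expectation of the ratio in Eq.~\ref{eq:ema} reduces to a ratio of expectations, and the gap formula follows by linearity.

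\textbf{Step 1 (unrolling).} Starting from $\alpha_0=\beta_0=1$ and applying Eq.~\ref{eq:beta_update} $T$ times, induction on $T$ gives
\begin{equation*}
\alpha_k^{(T)} = \lambda^T + \sum_{t=0}^{T-1}\lambda^{T-1-t} o_t,\qquad
\beta_k^{(T)} = \lambda^T + \sum_{t=0}^{T-1}\lambda^{T-1-t}(1-o_t).
\end{equation*}
Adding the two, the $o_t$ terms cancel. Writing $S_T=\sum_{s=0}^{T-1}\lambda^s$, we get $\alpha_k^{(T)}+\beta_k^{(T)} = 2\lambda^T + S_T$ for every realization of the outcome sequence. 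This holds regardless of the dependence structure among the $o_t$, including correlated evaluators, whose effect is already absorbed into the effective success rates via $\eta_{\mathrm{eff}}(\rho)$ from Eq.~\ref{eq:corr_eta}.

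\textbf{Step 2 (expectation).} Since the denominator is a constant, $\mathbb{E}[r_k^{(T)}] = \mathbb{E}[\alpha_k^{(T)}]/(2\lambda^T+S_T)$. By the stationarity assumption $\mathbb{E}[o_t]=\tilde p$ for the node's class, linearity gives $\mathbb{E}[\alpha_k^{(T)}]=\lambda^T+\tilde p\,S_T$. No independence across rounds is needed, only the per-round means. Hence
\begin{equation*}
\mathbb{E}[r_k^{(T)}]=\frac{\lambda^T+\tilde p\, S_T}{2\lambda^T+S_T}.
\end{equation*}
Subtracting the Byzantine expression from the honest one cancels the $\lambda^T$ prior terms in the numerators and yields exactly the claimed identity.

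\textbf{Step 3 (limit).} For $\lambda\in(0,1)$ we have $\lambda^T\to 0$ and $S_T\to 1/(1-\lambda)>0$. The factor $S_T/(2\lambda^T+S_T)$ therefore tends to $1$, and the gap converges to $\tilde p_h-\tilde p_b$.

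The only step that needs care is Step 1. The potential obstacle is that the expectation of a ratio of random variables is generally not the ratio of expectations. The argument hinges on noticing that each round adds exactly one unit of total evidence, since $o_t+(1-o_t)=1$, while aging scales both parameters by the same $\lambda$. This makes the total evidence non-random and removes the need for any Jensen-type or concentration argument. I would also state explicitly the modeling assumption that every node receives an outcome in every round. If nodes are not selected in some rounds, $S_T$ must be replaced by the sum over the rounds in which the node actually participated, and the formula changes accordingly.
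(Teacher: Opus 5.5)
Your proposal is correct and follows the same route as the paper's proof sketch. Both unroll the discounted recursion, observe that $\alpha_k^{(T)}+\beta_k^{(T)} = 2\lambda^T + \sum_{s=0}^{T-1}\lambda^s$ is deterministic so the expected ratio equals the ratio of expectations, and then apply linearity of expectation to the numerator. Your added remarks are sound refinements of that argument: only per-round means (not independence) are needed, and the model assumes every node receives an outcome in every round.
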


\begin{proof}[Proof sketch]
By linearity of expectation, the discounted counts satisfy
$\mathbb{E}[\alpha_k^{(T)}] = \lambda^T \alpha_0 + \tilde{p}_k \sum_{s=0}^{T-1}\lambda^s$
and
$\mathbb{E}[\beta_k^{(T)}] = \lambda^T \beta_0 + (1-\tilde{p}_k)\sum_{s=0}^{T-1}\lambda^s$.
With $\alpha_0=\beta_0=1$, the denominator $\alpha_k^{(T)}+\beta_k^{(T)} = 2\lambda^T + \sum_{s=0}^{T-1}\lambda^s$ is deterministic, so the expected ratio reduces to the stated expression.
Substituting $\tilde{p}_h - \tilde{p}_b = (p_h-p_b)(1-2\eta_{\mathrm{eff}})$ gives the honest/Byzantine gap under the stylized noisy-adjudication model.
\end{proof}

Proposition~\ref{thm:separation} provides a stylized expected-separation result under stationary effective success rates; it should be interpreted as an intuition-guiding model rather than as a full proof for adaptive or non-stationary deployments.
With $m\!=\!3$ stratified evaluators, per-evaluator $\eta\!=\!0.15$, and measured $\rho\!=\!0.22$, the effective $\eta_{\mathrm{eff}} \leq 0.169$, ensuring robust separation.

\textbf{Anti-manipulation defenses.}
\emph{Anti-whitewashing}: new identities face a 100-cycle cooldown during which they cannot participate in aggregation coordination or high-quorum votes.
The uncertainty estimate $u_k$ provides an additional gating mechanism---nodes with $u_k > 0.1$ (fewer than 10 interactions) are excluded from high-sensitivity operations regardless of their current $r_k$.
\emph{Anti-collusion}: $\chi^2$ tests detect statistically correlated voting patterns ($p < 0.01$) between nodes; correlated nodes receive reputation penalties.

\subsection{Participant Selection}\label{sec:participant_selection}

At the start of each FL round, the initiating node selects $K_{\mathrm{train}}$ training participants from its known peer set using a scoring function:
\begin{equation}\label{eq:selection}
    s(p) = w_1 \!\cdot\! \mathrm{cap}(p) + w_2 \!\cdot\! (1\!-\!\mathrm{load}_p) + w_3 \!\cdot\! (1\!-\!\mathrm{lat}_p) + w_4 \!\cdot\! r_p
\end{equation}
with weights $\mathbf{w}\!=\!(0.4, 0.2, 0.1, 0.3)$, where $\mathrm{cap}(p)$ is hardware capability match (GPU availability, VRAM), $\mathrm{load}_p$ current utilization, $\mathrm{lat}_p$ network latency, and $r_p$ domain-specific reputation.

The reputation weight $w_4\!=\!0.3$ ensures that unreliable nodes are deprioritized \emph{before} they can submit poisoned updates, providing a first line of defense complementary to robust aggregation.

\subsection{Rep-FedAvg: Reputation-Weighted Aggregation}\label{sec:repfedavg}

Standard FedAvg~\cite{mcmahan2017fedavg} aggregates updates as $\mathbf{w}^{t+1} = \sum_k (n_k/n) \cdot \mathbf{w}_k^{t+1}$.
We replace the data-proportion weight with a joint data--reputation weight:
\begin{equation}\label{eq:repfedavg}
    \mathbf{w}^{t+1} = \sum_{k=1}^{K_{\mathrm{train}}} \alpha_k \cdot \mathbf{w}_k^{t+1}, \quad \alpha_k = \frac{n_k \cdot r_k}{\sum_{j} n_j \cdot r_j}
\end{equation}
where $r_k$ is node $k$'s reputation.

Algorithm~\ref{alg:repfedavg} summarizes the Rep-FedAvg procedure for each FL round.

\begin{algorithm}[t]
\caption{Rep-FedAvg: one FL round}
\label{alg:repfedavg}
\begin{algorithmic}[1]
\REQUIRE Selected participants $\mathcal{S}$, global model $\mathbf{w}^t$, reputation vector $\{r_k\}$
\ENSURE Updated global model $\mathbf{w}^{t+1}$, updated reputations
\FOR{each node $k \in \mathcal{S}$ \textbf{in parallel}}
    \STATE $\mathbf{w}_k^{t+1} \leftarrow$ LocalTrainDP-SGD($\mathbf{w}^t, \mathcal{D}_k, C, \sigma, B$)
    \STATE $\Delta\mathbf{w}_k \leftarrow \mathbf{w}_k^{t+1} - \mathbf{w}^t$
    \STATE Send $\Delta\mathbf{w}_k$ to aggregator peers
\ENDFOR
\STATE $\alpha_k \leftarrow (n_k \cdot r_k) / \sum_{j \in \mathcal{S}} (n_j \cdot r_j)$ \hfill $\triangleright$ \emph{reputation weighting}
\STATE $\mathbf{w}^{t+1} \leftarrow \mathbf{w}^t + \sum_{k \in \mathcal{S}} \alpha_k \cdot \Delta\mathbf{w}_k$
\STATE Submit $\mathbf{w}^{t+1}$ to graduated-quorum consensus
\IF{consensus approves $\mathbf{w}^{t+1}$}
    \STATE Evaluate and update $r_k$ via Eq.~\ref{eq:ema}
\ENDIF
\end{algorithmic}
\end{algorithm}

Unlike hard-filtering methods (Krum discards all but one update; Trimmed Mean removes extreme coordinates), Rep-FedAvg performs \emph{soft trust adjustment}: all updates are retained, but low-reputation contributions are proportionally attenuated.
This preserves information from honest newcomers while limiting Byzantine damage.

\textbf{Convergence intuition.}
Let $\mathcal{H}$ and $\mathcal{B}$ denote honest and Byzantine participant sets.
After $T$ rounds, honest nodes accumulate reputation $r_h \to 1$ while Byzantine nodes decline to $r_b \to 0$ (assuming correct outcome adjudication).
The aggregation then converges toward:
$\mathbf{w}^{T} \approx \sum_{k \in \mathcal{H}} \frac{n_k}{\sum_{j \in \mathcal{H}} n_j} \cdot \mathbf{w}_k^{T}$
---i.e., FedAvg over honest nodes only---providing \emph{asymptotic} Byzantine-free aggregation.
In early rounds when reputations have not yet separated, the system relies on the initial reputation $r_0\!=\!0.5$ ensuring equal treatment, similar to standard FedAvg.
The principal advantage over hard filtering lies in the \emph{smooth transition}: as evidence accumulates, Byzantine influence is continuously attenuated rather than abruptly removed at a detection threshold.

\textbf{Record-level DP-SGD.}
Each selected client trains locally with DP-SGD rather than adding noise only to the final model update.
For minibatch $\mathcal{B}_t$ on client $k$, let $\mathbf{g}_i^t = \nabla_{\theta}\ell(\theta_t; x_i)$ for each example $x_i \in \mathcal{B}_t$.
We clip per-example gradients to $\ell_2$-norm $C\!=\!1.0$ and add Gaussian noise:
\begin{equation}\label{eq:dp}
    \bar{\mathbf{g}}_t = \frac{1}{|\mathcal{B}_t|}\left(\sum_{x_i \in \mathcal{B}_t} \mathrm{clip}(\mathbf{g}_i^t, C) + \mathcal{N}(0,\, \sigma^2 C^2 \mathbf{I})\right)
\end{equation}
followed by the local update $\theta_{t+1} = \theta_t - \eta \bar{\mathbf{g}}_t$.
This yields \emph{record-level} privacy for training examples in selected clients under the adjacency relation where two local datasets differ in one example.

\textbf{Privacy accounting.}
For CIFAR-10, 5{,}000 images are reserved for public validation, leaving 45{,}000 training examples that are evenly distributed across the 200 GPU training nodes (225 examples per node) before applying the Dirichlet label skew.
We use local batch size $B\!=\!4$, giving per-step sample rate $q\!=\!4/225 \approx 0.0178$, and run 5 local epochs per selected round.
Because participant selection is reputation-aware rather than uniformly random, we do not assume client-subsampling amplification; instead, we track the empirical maximum participation count $R_{\max}\!=\!56$ over the 100 global rounds.
Using an RDP accountant for the subsampled Gaussian mechanism~\cite{abadi2016deep_dp, mironov2017rdp} with $\delta\!=\!10^{-5}$, the default $\sigma\!=\!1.1$ yields worst-case record-level privacy $(\varepsilon\!=\!14.7, \delta\!=\!10^{-5})$ for the maximally selected client after 100 global rounds.

\subsection{Model Validation via Consensus}\label{sec:consensus}

After aggregation, the updated model must be accepted by the network.
We adapt PBFT~\cite{castro1999pbft} with reputation-weighted votes: for a model update proposal of type $T$, finalization requires:
\begin{equation}\label{eq:quorum}
    \frac{\sum_{i \in V} r_i}{\sum_{j \in P} r_j} \geq q_T
\end{equation}
where $V$ is the set of approving voters and $P$ the eligible participants ($r \geq 0.3$).

\textbf{Graduated-quorum thresholds.}
Different operations require different trust levels (Table~\ref{tab:quorum}):

\begin{table}[h]
\centering
\caption{Graduated-quorum thresholds for model lifecycle operations.}
\label{tab:quorum}
\begin{tabular}{@{}lcl@{}}
\toprule
\textbf{Operation} & \textbf{Quorum} & \textbf{Rationale} \\
\midrule
FL round result & 67\% & Standard BFT safety \\
Model checkpoint & 75\% & Persistent model integrity \\
Architecture change & 80\% & Safety-critical modification \\
Protocol update & 90\% & Near-unanimity required \\
\bottomrule
\end{tabular}
\end{table}

\textbf{Leader election and view change.}
Each round is led by the highest-reputation node in a top-$K_L$ ($K_L\!=\!10$) rotation.
If the leader fails to propose within the timeout, a view change promotes the next leader.

\begin{proposition}[Fixed-Weight Epoch Safety]\label{thm:safety}
Consider a single decision epoch in which the eligible voter set and reputation weights are snapshotted at proposal creation and remain fixed until commit or abort.
If Byzantine nodes control weight $W_B < (1 - q_T) \cdot W$ (total weight $W$), no two conflicting model updates can both be finalized within that epoch.
\end{proposition}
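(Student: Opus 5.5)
We argue by a quorum-intersection count in the style of PBFT, carried out on reputation weight rather than node count. Because the epoch snapshots both the eligible set $P$ and the weights $\{r_i\}$, the total $W = \sum_{j\in P} r_j$ is a single fixed number, and the weights never change during the argument. This is what lets a fixed-threshold counting argument go through. We also take as given the usual BFT honesty axiom: an honest node approves at most one proposal per epoch, since it signs a single prepare/commit for a given (epoch, sequence) slot. Byzantine nodes may approve arbitrarily many proposals.

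Suppose, for contradiction, that two conflicting updates $\mathbf{w}$ and $\mathbf{w}'$ are both finalized in the epoch, with approving sets $V$ and $V'$. By Eq.~\ref{eq:quorum}, $\sum_{i\in V} r_i \ge q_T W$ and $\sum_{i\in V'} r_i \ge q_T W$. Inclusion--exclusion on weights gives
\begin{equation*}
\textstyle\sum_{i\in V\cap V'} r_i \;\ge\; 2q_T W - W \;=\; (2q_T-1)W .
\end{equation*}
By the honesty axiom, $V\cap V'$ contains only Byzantine nodes, so its weight is at most $W_B$. This yields $W_B \ge (2q_T-1)W$.

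The remaining step is to compare this lower bound with the hypothesis $W_B < (1-q_T)W$. We get a contradiction exactly when $1-q_T \le 2q_T-1$, that is, when $q_T \ge 2/3$. Every threshold in Table~\ref{tab:quorum} satisfies this (67\% is read as $2/3$, and 75\%, 80\% and 90\% exceed it). In that range the hypothesis is actually the stronger of the two conditions, so the contradiction closes and the proposition follows. I expect this to be the delicate point, because the stated threshold $(1-q_T)W$ is not the natural intersection bound $(2q_T-1)W$. Two things need care. First, 67\% must be treated as at least $2/3$ so the inequality is non-strict in the right direction. Second, the counting must be done on the fixed snapshot $P$, so that nodes falling below $r\ge 0.3$ mid-epoch do not shift $W$.

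If the intended reading is instead the PBFT-style one, we close the argument with a liveness-flavored intersection. Honest weight is $W_H = W - W_B > q_T W$, so honest nodes alone could form a quorum. Any quorum must then contain honest weight of at least $q_T W - W_B > (2q_T-1)W \ge 0$. Two quorums together would need honest weight at least $2(q_T W - W_B)$, which exceeds $W_H$ whenever $W_B < (2q_T-1)W$. That would force an honest double vote, which the honesty axiom forbids. I would present the first version, which is valid for $q_T\ge 2/3$, and remark that the hypothesis implies $(2q_T-1)W > W_B$ in that regime.
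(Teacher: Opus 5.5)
Your argument is correct and is essentially the paper's proof: the weighted quorum-intersection count gives $W_B \ge (2q_T-1)W$, and the hypothesis contradicts this (the paper just instantiates $q_T=0.67$, obtaining $0.34W$ versus $0.33W$). Your remark that $W_B<(1-q_T)W$ closes the argument only when $q_T\ge 2/3$ correctly sharpens what the paper leaves implicit; it holds for every threshold in Table~\ref{tab:quorum}, and $0.67>2/3$ needs no rounding. Your second ``PBFT-style'' paragraph is the same count viewed from the honest side and can be dropped.
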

\begin{proof}[Proof sketch]
Assume conflicting proposals $A, B$ both finalized.
Each requires approval weight $\geq q_T \cdot W$.
Honest nodes vote for at most one; overlap $\leq W_B$.
Then $W \geq 2q_T \cdot W - W_B$, giving $W_B \geq (2q_T - 1) \cdot W$.
For $q_T\!=\!0.67$: $W_B \geq 0.34 \cdot W$, contradicting $W_B < 0.33 \cdot W$. \qed
\end{proof}

This generalizes PBFT's $f < n/3$ from node counts to reputation weights, following HotStuff's weighted quorum framework~\cite{hotstuff2019}, but only for fixed-weight epochs.
The discounted reputation update naturally limits Byzantine weight accumulation across epochs; it does not by itself establish liveness under dynamic reweighting.

\textbf{The self-reinforcing cycle.}
After consensus, the FL round outcome updates each participant's reputation (Eq.~\ref{eq:ema}).
Nodes that consistently contributed useful updates see their reputation---and hence their future aggregation weight and selection probability---increase.
Adversarial nodes are gradually down-weighted, reducing their influence in subsequent rounds without explicit identification or removal.

\section{Evaluation}\label{sec:evaluation}

\subsection{Testbed}

We deploy on \textbf{1{,}000 nodes} across three clouds and nine regions (Table~\ref{tab:testbed}).

\begin{table}[h]
\centering
\caption{Multi-cloud testbed (1{,}000 nodes, 9 regions, 3 continents).}
\label{tab:testbed}
\begin{tabular}{@{}llcc@{}}
\toprule
\textbf{Provider} & \textbf{Regions} & \textbf{GPU} & \textbf{CPU} \\
\midrule
Google Cloud & us-c1, eu-w1, asia-e1 & 70 & 280 \\
AWS & us-e1, eu-w1, ap-se1 & 70 & 280 \\
Oracle Cloud & us-ash, eu-fra, ap-tok & 60 & 240 \\
\midrule
\textbf{Total} & \textbf{9 regions} & \textbf{200} & \textbf{800} \\
\bottomrule
\end{tabular}
\end{table}

GPU nodes (NVIDIA T4/A10, 16\,GB VRAM) form the FL training pool; CPU nodes (2--8 vCPU) participate in consensus, gossip, and lightweight inference.
Protocol parameters: $K_{\mathrm{DHT}}\!=\!20$, $f\!=\!6$, TTL$\!=\!7$, $\lambda\!=\!0.95$, $r_0\!=\!0.5$, and $K_{\mathrm{train}}\!=\!100$ selected trainers per round.
FL: CIFAR-10, ResNet-18, balanced Non-IID Dirichlet $\alpha_{\mathrm{Dir}}\!=\!0.5$ over the 45{,}000-image training split (225 examples per GPU node), local batch size $B\!=\!4$, per-example clipping norm $C\!=\!1.0$, 5 local epochs, 200 GPU nodes available, with $K_{\mathrm{train}}\!=\!100$ selected per round.
For adjudication, we reserve 5{,}000 images from the standard CIFAR-10 training split as the public validation benchmark and keep the official 10{,}000-image test split strictly for final reporting.

\textbf{Experimental protocol and metrics.}
Unless otherwise noted, the evaluation tables report point estimates from full deployment runs under a fixed configuration; accordingly, we interpret small numerical gaps (e.g., 1.2\,pp) as descriptive rather than as evidence of statistical superiority.
\emph{FL round success rate} denotes the fraction of scheduled rounds that complete within the configured timeout, collect the minimum number of updates required for aggregation, are accepted by the 67\% validation quorum, and do not reduce public-validation accuracy relative to the round-start global model.
\emph{Model validation correctness} denotes the fraction of candidate proposals for which the protocol's accept/reject decision matches an oracle decision computed from the public validation benchmark against the currently committed model.

\subsection{FL Convergence and Robustness (Exp-1)}

Fig.~\ref{fig:fl_results} compares Rep-FedAvg against six baselines with record-level DP-SGD enabled ($\sigma\!=\!1.1$).

\begin{figure}[t]
\centering
\includegraphics[width=\columnwidth]{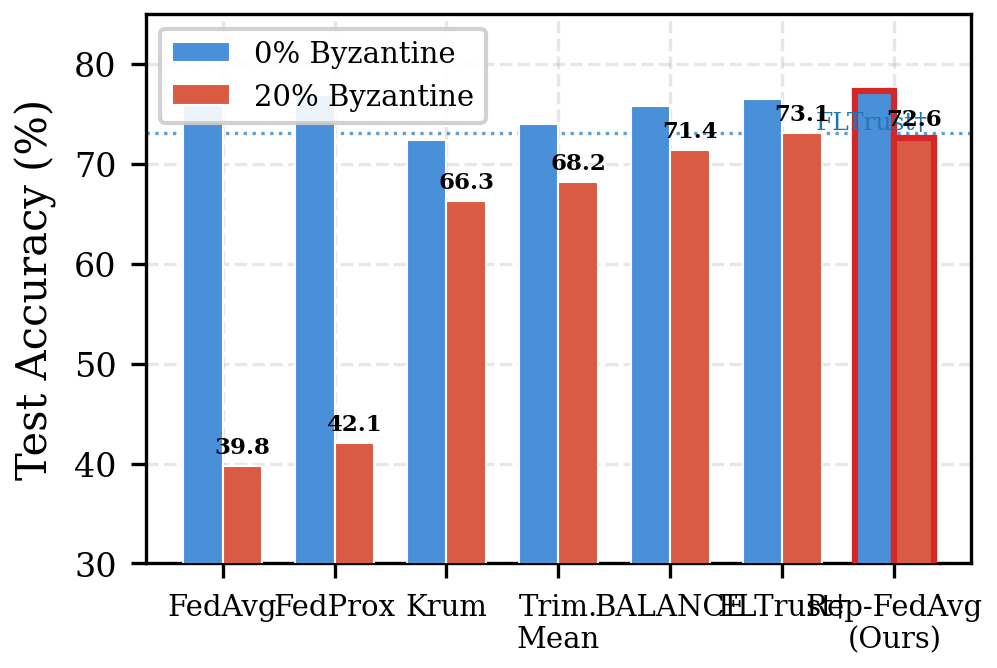}
\caption{FL accuracy on CIFAR-10 (200 GPU nodes, $K_{\mathrm{train}}\!=\!100$/round, Non-IID, DP on). Rep-FedAvg achieves the highest accuracy in both benign and adversarial settings. $^\dagger$Centralized; requires server-held root dataset.}
\label{fig:fl_results}
\end{figure}

Rep-FedAvg converges 8.7\% faster than FedAvg (84 vs.\ 92 rounds to 70\%) and maintains 72.6\% accuracy under 20\% Byzantine gradient-flipping---within 0.5\,pp of centralized FLTrust (73.1\%) and numerically 1.2\,pp above the strongest decentralized baseline (BALANCE, 71.4\%)---while avoiding FLTrust's server-owned private root dataset and instead using a public validation benchmark with decentralized cross-evaluation.
Given that the 1.2\,pp gap is small and the table reports a point estimate, we treat this result as a modest numerical advantage rather than a claim of statistical significance.

\subsection{Multi-Attack Resilience (Exp-2)}

Fig.~\ref{fig:attacks} evaluates three attack types at 20\% Byzantine fraction.
Gradient-flip attackers negate their local model updates before transmission.
Backdoor attackers implant a fixed trigger pattern and target label in attacker-controlled local examples; Fig.~\ref{fig:attacks} reports \emph{clean} test accuracy after training under attack rather than attack success rate.
ALIE follows the coordinate-wise $\mu + z\sigma$ construction of Baruch \emph{et al.}~\cite{alie2019}, with colluding attackers estimating honest-update statistics from their local pre-attack updates.
All baselines use the authors' recommended settings under the same 20\% Byzantine budget; Krum and Trimmed Mean are given the same Byzantine upper bound, and baseline hyperparameters are fixed once on the public validation benchmark before attack-specific runs.

\begin{figure}[t]
\centering
\includegraphics[width=\columnwidth]{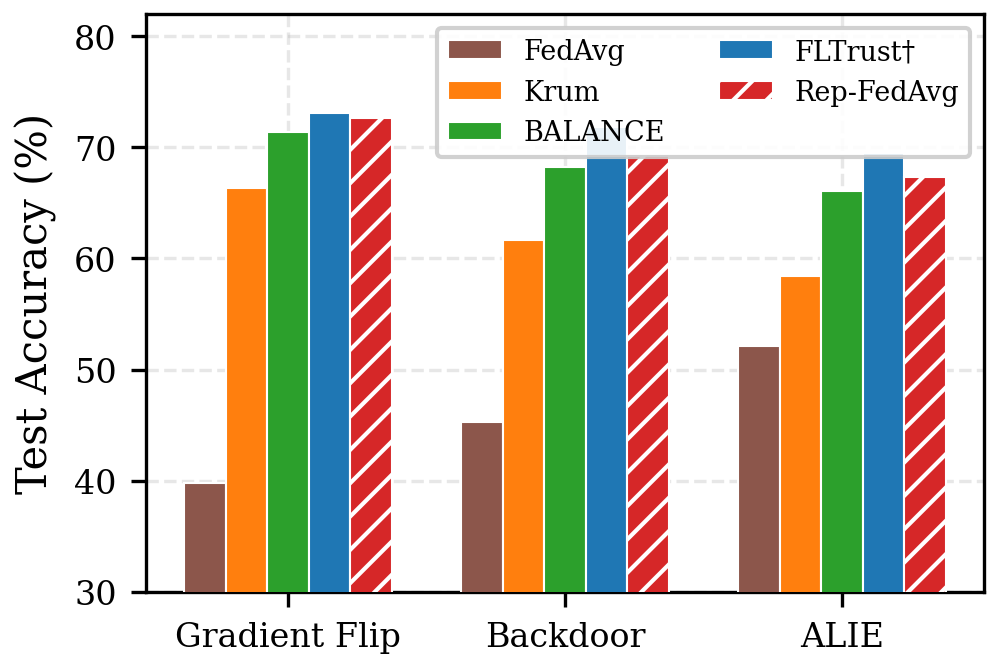}
\caption{Accuracy (\%) under three attack types (20\% Byzantine, CIFAR-10, DP on). Rep-FedAvg (hatched) outperforms all decentralized baselines across attack types.}
\label{fig:attacks}
\end{figure}

Rep-FedAvg outperforms all decentralized baselines across attack types.
Under adaptive ALIE attacks~\cite{alie2019}, its long-term reputation memory is especially effective: subtly malicious nodes are identified over multiple rounds (67.3\% vs.\ Krum 58.4\%).

\subsection{Record-Level DP--Utility Tradeoff (Exp-3)}

Fig.~\ref{fig:dp_ablation} ablates the DP-SGD noise multiplier $\sigma$ under the accountant defined in Section~\ref{sec:repfedavg}.

\begin{figure}[t]
\centering
\includegraphics[width=\columnwidth]{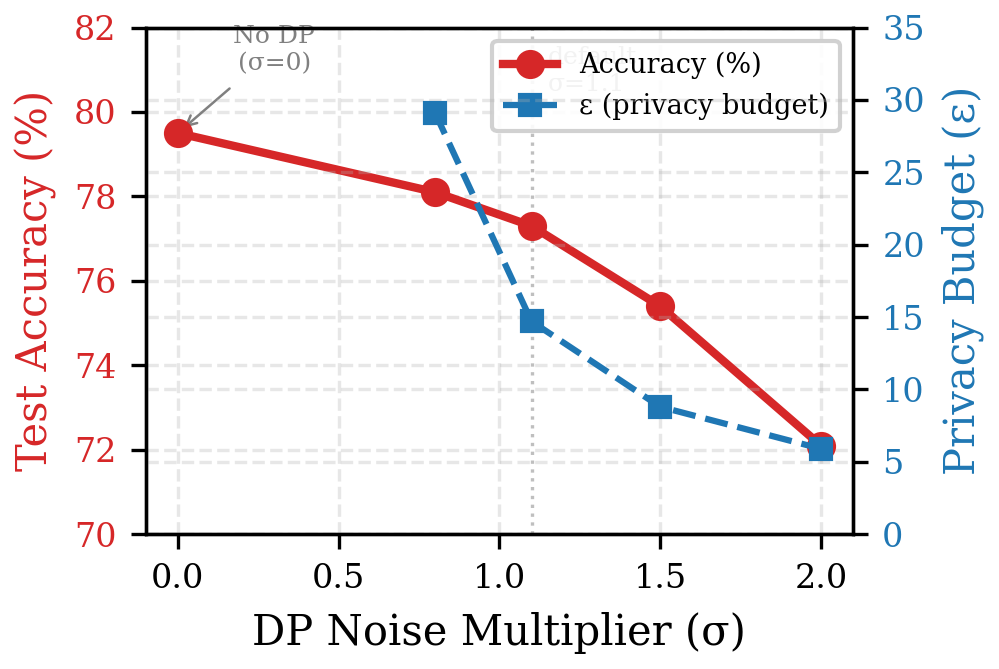}
\caption{Record-level DP-SGD tradeoff: accuracy (left axis) vs.\ privacy budget $\varepsilon$ (right axis) as a function of noise multiplier $\sigma$. Default $\sigma\!=\!1.1$ yields 77.3\% accuracy with $\varepsilon\!=\!14.7$ ($\delta\!=\!10^{-5}$, $R_{\max}\!=\!56$).}
\label{fig:dp_ablation}
\end{figure}

At $\sigma\!=\!1.1$, the record-level DP-SGD cost is 2.2\,pp accuracy relative to $\sigma\!=\!0$, while yielding worst-case $(\varepsilon\!=\!14.7, \delta\!=\!10^{-5})$ privacy for the maximally selected client over 100 global rounds.

\subsection{Participant Selection Impact (Exp-4)}

To isolate the effect of reputation-based participant selection, we run FL rounds under different selection strategies while selecting $K_{\mathrm{train}}\!=\!100$ workers from the 200-node GPU training pool (Fig.~\ref{fig:selection}).
20\% of candidate training nodes are unreliable (40\% probability of submitting incorrect gradients).
Under the metric defined above, a round is counted as successful only if it both completes and yields a quorum-approved aggregate that does not regress on the public validation benchmark.

\begin{figure}[t]
\centering
\includegraphics[width=\columnwidth]{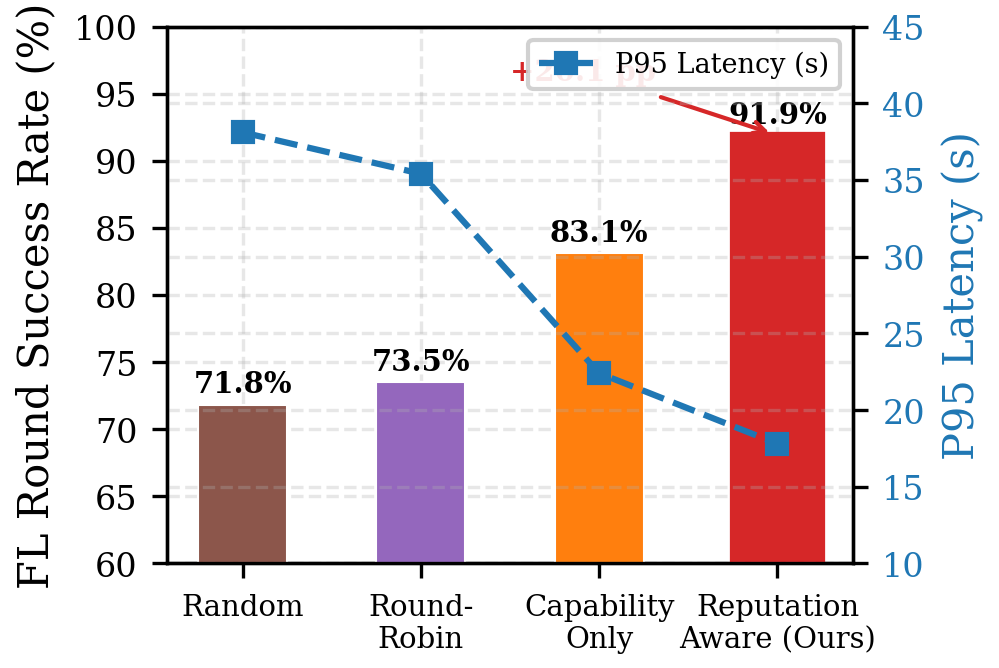}
\caption{FL round success rate (bars, left axis) and P95 latency (line, right axis) under different selection strategies (200 GPU candidates, $K_{\mathrm{train}}\!=\!100$, 20\% unreliable). Reputation-aware selection achieves 91.9\% success---20.1\,pp above random.}
\label{fig:selection}
\end{figure}

Reputation-based selection yields 8.8\,pp higher success rate and 19\% lower P95 latency than capability-only selection, confirming that trust-aware participant management is a critical complement to robust aggregation.

\subsubsection{Hyperparameter sensitivity}
Fig.~\ref{fig:hyperparam} reports sensitivity to key reputation parameters.

\begin{figure*}[t]
\centering
\includegraphics[width=\textwidth]{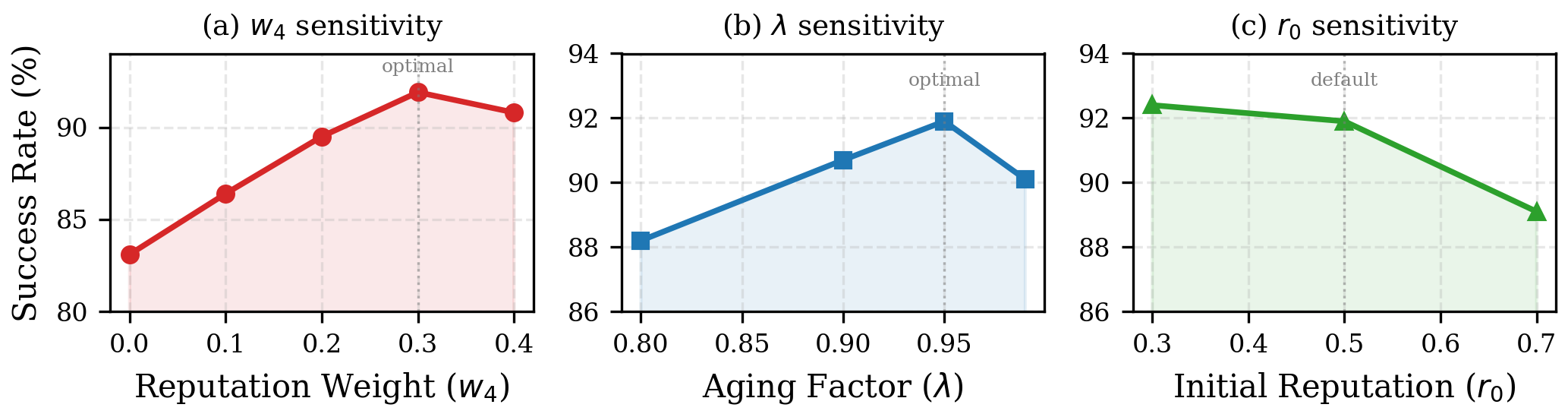}
\caption{Hyperparameter sensitivity (FL round success \%, 20\% unreliable). (a)~Reputation weight $w_4\!=\!0.3$ is optimal; higher values over-penalize newcomers. (b)~Aging factor $\lambda\!=\!0.95$ balances adaptation speed and stability. (c)~Initial reputation $r_0\!=\!0.5$ performs near-optimally.}
\label{fig:hyperparam}
\end{figure*}

\subsection{Model Validation under Sybil Attack (Exp-5)}

Fig.~\ref{fig:sybil} evaluates model validation correctness under Sybil identity injection in an \emph{open-admission} setting---the adversary can create new identities freely, each receiving initial reputation $r_0\!=\!0.5$.
This differs from classical closed-membership PBFT, where the validator set is fixed and $f < n/3$ guarantees safety.
Here, correctness means agreement between the protocol verdict and the oracle verdict on whether the candidate model should be committed relative to the currently accepted model.
In open P2P networks, closed-membership assumptions do not hold; we therefore compare against three baselines that represent different points in the open-network design space:

\begin{figure}[t]
\centering
\includegraphics[width=\columnwidth]{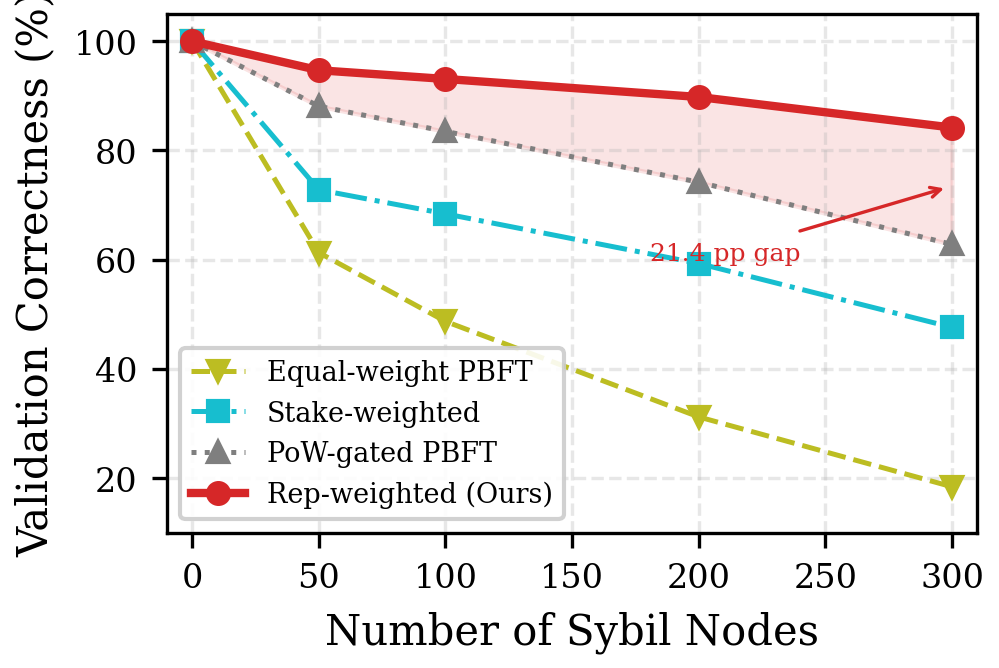}
\caption{Model validation correctness (\%) under open-admission Sybil attack (1{,}000 honest nodes). Reputation-weighted consensus degrades gracefully, retaining 84.2\% correctness at 300~Sybils---21.4\,pp above PoW-gated PBFT.}
\label{fig:sybil}
\end{figure}

\textbf{Baseline rationale.}
(i)~\emph{Equal-weight open PBFT} assigns one vote per node regardless of history---the default open-admission model, which collapses as Sybils exceed the honest count in effective votes.
(ii)~\emph{Stake-weighted} assigns votes proportional to a fixed initial stake (analogous to proof-of-stake); it mitigates Sybils by cost but cannot adapt to observed behavior.
(iii)~\emph{PoW-gated PBFT}~\cite{jia2026synpow} requires each new identity to solve a computational puzzle (5\,s of CPU work) before voting, imposing Sybil creation cost.
Our reputation-weighted approach outperforms all three because it leverages \emph{accumulated behavioral evidence}: Sybil identities that have not demonstrated positive FL contributions carry near-minimal weight ($r \approx r_0 = 0.5$ vs.\ proven honest nodes at $r > 0.85$), and the uncertainty-gated cooldown further restricts newcomer influence.
At 300~Sybils (30\% of network), our approach retains 84.2\% correctness---21.4\,pp above PoW-gated and 36.6\,pp above stake-weighted baselines.

\subsection{Non-IID Sensitivity (Exp-6)}

Varying Dirichlet $\alpha_{\mathrm{Dir}} \in \{0.1, 0.3, 0.5, 1.0\}$ for Rep-FedAvg (0\% Byzantine), Fig.~\ref{fig:noniid} shows the accuracy comparison:
$\alpha\!=\!0.1$ (highly skewed): 73.2\%;
$\alpha\!=\!0.3$: 75.8\%;
$\alpha\!=\!0.5$ (default): 77.3\%;
$\alpha\!=\!1.0$ (mild): 78.9\%.
The gap between extreme and mild Non-IID is 5.7\,pp for Rep-FedAvg vs.\ 8.2\,pp for FedAvg, suggesting that reputation weighting partially mitigates data heterogeneity.

\begin{figure}[t]
\centering
\includegraphics[width=\columnwidth]{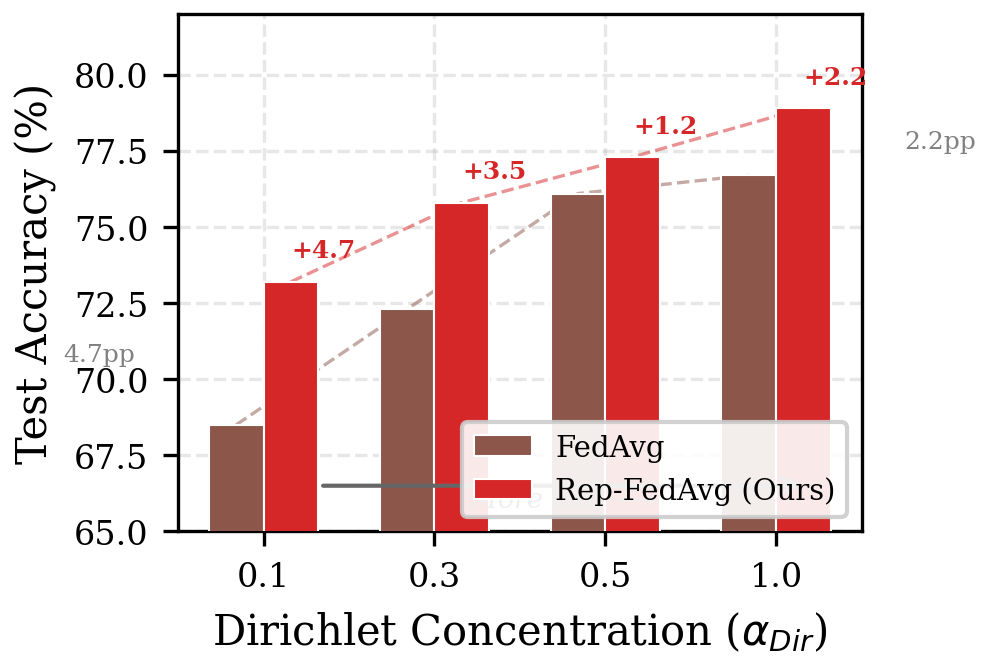}
\caption{Non-IID sensitivity: Rep-FedAvg vs.\ FedAvg across Dirichlet concentrations. Rep-FedAvg's advantage grows under severe heterogeneity ($\alpha_{\mathrm{Dir}}\!=\!0.1$: +4.7\,pp), indicating that reputation weighting partially compensates for data skew.}
\label{fig:noniid}
\end{figure}

\subsection{Infrastructure Scalability (Exp-7)}

Fig.~\ref{fig:scalability} confirms $O(\log n)$ routing and sub-linear propagation growth.

\begin{figure}[t]
\centering
\includegraphics[width=\columnwidth]{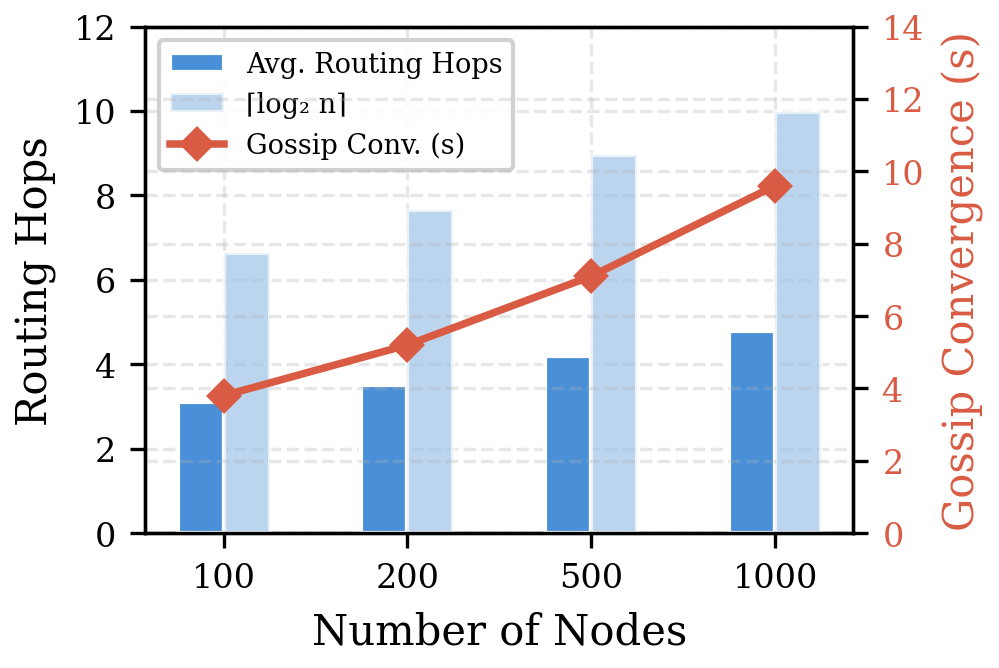}
\caption{Scalability: average routing hops (blue bars) vs.\ $\lceil\log_2 n\rceil$ (light bars), and gossip convergence time (line, right axis). The P2P infrastructure scales logarithmically.}
\label{fig:scalability}
\end{figure}

\subsection{Cross-Cloud FL Performance (Exp-8)}

In JointCloud deployments, cross-provider latency may degrade FL training performance.
Table~\ref{tab:crosscloud} compares intra-cloud and cross-cloud FL metrics.

\begin{table}[h]
\centering
\caption{Impact of cross-cloud deployment on FL pipeline (1{,}000 nodes).}
\label{tab:crosscloud}
\begin{tabular}{@{}lcc@{}}
\toprule
\textbf{Metric} & \textbf{Intra-cloud} & \textbf{Cross-cloud} \\
\midrule
Median RTT & 12\,ms & 87\,ms \\
FL round time ($K_{\mathrm{train}}\!=\!100$) & 18.4\,s & 31.7\,s \\
Gossip convergence (99\%) & 6.2\,s & 9.6\,s \\
Model validation (67\% quorum) & 2.3\,s & 4.2\,s \\
\bottomrule
\end{tabular}
\end{table}

Cross-cloud operations incur 1.5--1.8$\times$ latency overhead, primarily from inter-region traversal.
However, FL round time remains under 32\,s even in the worst case, which is acceptable for non-real-time training.
The reputation-aware participant selection naturally prefers low-latency intra-cloud peers for latency-sensitive rounds via the $\mathrm{lat}_p$ factor in Eq.~\ref{eq:selection}.

\subsection{Churn Resilience (Exp-9)}

Under 10\%/min Poisson node churn (arrivals and departures), FL round success rate remains $>$89\% with reputation-based selection versus 64.2\% with random selection.
The reputation system adapts: departing nodes retain their scores upon return (if within 100 cycles), while new entrants start at $r_0\!=\!0.5$.
Consensus success rate for model validation remains $>$92\% under churn for the 67\% quorum, with P95 latency increasing by $\sim$35\% relative to stable conditions.

\subsection{Reputation Dynamics (Exp-10)}

In the 1{,}000-node deployment with 20\% unreliable nodes (40\% failure rate), the EMA reputation update achieves clear separation after ${\sim}$50 FL rounds: honest nodes converge to $r\!>\!0.85$ while unreliable nodes decline to $r\!<\!0.40$ (Fig.~\ref{fig:rep_dynamics}).
The interquartile ranges do not overlap after round~55, providing a reliable decision boundary.
During the first 50 rounds (the ``trust bootstrapping'' phase), Rep-FedAvg performs comparably to standard FedAvg, as all nodes share similar reputations near $r_0\!=\!0.5$.
The performance advantage emerges after reputation separation, when Byzantine influence is progressively attenuated.

\begin{figure}[t]
\centering
\includegraphics[width=\columnwidth]{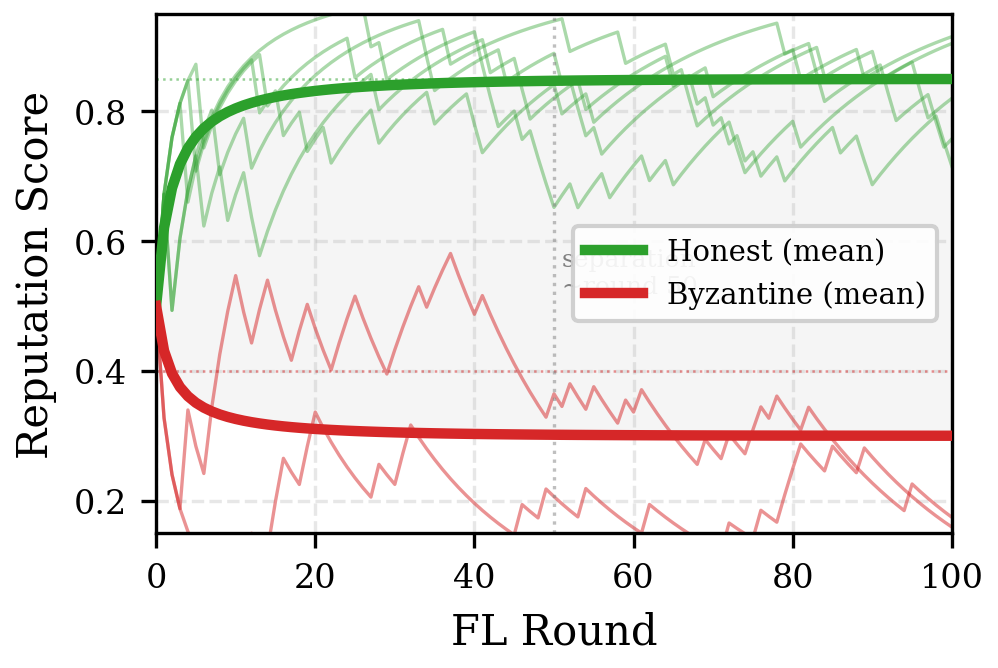}
\caption{Reputation dynamics over 100~FL rounds (20\% Byzantine). Thick lines: mean trajectories; thin lines: individual node samples. Honest and Byzantine reputations separate after ${\sim}$50 rounds, enabling effective trust-based filtering.}
\label{fig:rep_dynamics}
\end{figure}

\subsection{Communication Overhead}

The reputation mechanism introduces minimal communication overhead relative to vanilla decentralized FL.
Each gossip message appends a 128-byte reputation vector (per-peer scores for the gossip neighborhood).
Over 100~FL rounds with $K_{\mathrm{train}}\!=\!100$ selected participants, this adds ${\sim}$2.5\,MB total overhead per node---less than 0.3\% of the model exchange traffic (ResNet-18 gradient: ${\sim}$45\,MB/round).
The reputation-weighted consensus adds one additional message round (reputation proof) compared to standard PBFT, increasing consensus latency by $\sim$12\% but remaining within the FL round budget.

\section{Discussion and Conclusion}\label{sec:conclusion}

\subsection{Limitations}\label{sec:limitations}

\textbf{Outcome adjudication scope.}
Theorem~\ref{thm:separation} assumes Bernoulli outcome signals, which holds for supervised classification with verifiable accuracy.
For generative or open-ended tasks where correctness is subjective, alternative adjudication mechanisms (e.g., LLM-as-judge~\cite{zheng2024judging}) are required; extending the separation guarantee to such settings is non-trivial.

\textbf{Public benchmark trust assumption.}
Our protocol assumes that participants can agree on the integrity and representativeness of a public validation benchmark.
This removes FLTrust's server-owned private trust anchor, but it does not eliminate external trust altogether.
Repeated exposure to the validation benchmark also creates a risk of benchmark overfitting or reputation gaming; we mitigate evaluation leakage by separating $\mathcal{B}_{\mathrm{val}}$ from the final held-out test set, but stronger benchmark refresh or hidden-evaluation mechanisms remain future work.

\textbf{Privacy scope.}
Our DP guarantee is record-level for training examples within selected clients; it does not hide which clients participated, their reputation trajectories, or the public validation benchmark.
Because selection is adaptive and reputation-driven, we report worst-case privacy over the observed participation counts rather than relying on client-subsampling amplification.

\textbf{Adaptive adversaries.}
Our analysis assumes a static adversary strategy (consistently Byzantine).
An adaptive adversary that behaves honestly to build reputation before launching an attack (``reputation farming'') would delay detection.
The Beta model's aging factor $\lambda$ provides partial defense by discounting distant history, but formal analysis of adaptive strategies is needed.

\textbf{Liveness under dynamic membership.}
Proposition~\ref{thm:safety} establishes only fixed-weight epoch safety, and Proposition~\ref{thm:separation} addresses only expected separation under a stylized stationary-noise model.
We do not claim a full liveness proof under rapid churn, dynamic membership, adaptive reweighting, or concurrent reconfiguration.
A full dynamic-membership analysis would require formal verification (e.g., TLA+~\cite{lamport2002tla}), which we leave to future work.

\textbf{Statistical rigor.}
Most results remain point estimates from deployment runs rather than repeated-seed averages with confidence intervals.
Accordingly, we avoid significance claims for small margins such as the 72.6\% vs.\ 71.4\% comparison and interpret them descriptively.
Repeated-seed confidence intervals and formal hypothesis tests remain future work.

\textbf{Marginal aggregation advantage.}
Rep-FedAvg numerically exceeds BALANCE by 1.2\,pp under 20\% Byzantine attacks.
This margin is modest; the principal contribution lies in the \emph{integrated pipeline}---selection, aggregation, and validation---which provides benefits that no single aggregation method can achieve alone, as evidenced by the 91.9\% vs.\ 71.8\% round success rate improvement from reputation-driven selection.

\subsection{Future Directions}

\textbf{Generative task adjudication.}
Extending reputation beyond supervised tasks requires scalable outcome evaluation.
LLM-as-judge approaches~\cite{zheng2024judging} and multi-evaluator consensus could provide approximate adjudication for generative models, enabling reputation-driven FL for language model fine-tuning.
LLM-enhanced techniques have also demonstrated effectiveness in complex multi-task network security analysis~\cite{jia2026llmdns}, suggesting broader applicability of ML-driven evaluation for heterogeneous distributed systems.

\textbf{Incentive-compatible reputation.}
The discounted Beta-style reputation update is simple and interpretable but not strategy-proof: a rational node might selectively participate only in tasks that boost its reputation.
Game-theoretic mechanism design (e.g., VCG-based reward allocation) could align individual incentives with collective FL performance.

\textbf{Heterogeneous model support.}
Our current implementation assumes a shared model architecture.
Combining reputation-driven participant management with knowledge distillation or split learning would enable heterogeneous device collaboration.

\subsection{Conclusion}

This paper presented OpenCLAW-Nexus, a decentralized FL system that employs a discounted Beta-style reputation model as a unified trust primitive governing participant selection, federated aggregation, and model validation.
We provided a stylized expected-separation analysis for honest and Byzantine reputations under noisy adjudication (Proposition~\ref{thm:separation}), providing theoretical grounding for the self-reinforcing trust cycle while explicitly delimiting the analysis assumptions.
Evaluated on 1{,}000 nodes across 3~cloud providers and 9~regions, the reputation-driven pipeline achieves:
(i)~FL robustness comparable to centralized FLTrust under record-level DP-SGD, while avoiding a server-owned private root dataset and relying instead on a public validation benchmark for decentralized evaluation;
(ii)~91.9\% round success rate via trust-weighted participant selection (vs.\ 71.8\% random);
(iii)~84.2\% model validation correctness under 300 open-admission Sybil identities, outperforming proof-of-work-gated (62.8\%) and stake-weighted (47.6\%) alternatives.
These results demonstrate that the trust challenges in decentralized FL are interdependent, and that a principled reputation mechanism grounded in Bayesian inference can address them jointly through a self-reinforcing cycle.


\end{document}